\def\reusesizes{
\usepackage[a4paper]{geometry}
\setlength{\textheight}{19.8cm}
\setlength{\textwidth}{13.5cm}
\setlength{\oddsidemargin}{1.2cm}
\setlength{\evensidemargin}{1.2cm}
\setlength{\topmargin}{1.65cm} 
}
\newtheorem{theorem}{Theorem}
\newtheorem{lemma}[theorem]{Lemma}
\newtheorem{corollary}[theorem]{Corollary}
\newtheorem{proposition}[theorem]{Proposition}
\theoremstyle{definition}
\newtheorem{example}[theorem]{Example}
\newtheorem{definition}[theorem]{Definition}
\newtheorem{problem}{Problem}
\newtheorem{claim}[theorem]{Claim}
\newcommand{\red}{\textup{red}}
\title{Computing shortest 12-representants of labeled graphs}
\author{Asahi~Takaoka\thanks{
  College of Information and Systems, 
  Muroran Institute of Technology, 
  Muroran-shi, Hokkaido, 050--8585 Japan. 
  \textit{E-mail}: 
  \texttt{\href{mailto:takaoka@mmm.muroran-it.ac.jp}{takaoka@mmm.muroran-it.ac.jp}}
}}
\date{\today}
\begin{document}
\maketitle
\begin{abstract}
The notion of $12$-representable graphs was introduced as a variant of 
a well-known class of word-representable graphs. 
Recently, these graphs were shown to be equivalent to 
the complements of simple-triangle graphs.
This indicates that a $12$-representant of a graph 
(i.e., a word representing the graph) can be obtained 
in polynomial time if it exists. 
However, the $12$-representant is not necessarily optimal (i.e., shortest possible). 
This paper proposes an $O(n^2)$-time algorithm to generate 
a shortest $12$-representant of a labeled graph, 
where $n$ is the number of vertices of the graph. 

\end{abstract}

\section{Introduction}\label{sec:introduction}
A graph $G$ is \emph{word-representable} 
if there is a word $w$ over the alphabet $V(G)$ 
such that two letters $x$ and $y$ are adjacent in $G$ 
if and only if a word $xyxy \cdots$ or a word $yxyx \cdots$ remains 
after removing all other letters from $w$. 
Such a word $w$ is called a \emph{word-representant} of $G$. 
Recently, word-representable graphs have been investigated intensively~\cite{Kitaev17-LNCS,KL15-book}. 
One of the results relevant to this paper is 
the NP-hardness of the recognition, 
see~\cite[Theorem 39]{Kitaev17-LNCS} or~\cite[Theorem 4.2.15]{KL15-book}. 

Jones et al.~\cite{JKPR15-EJC} introduced 
the notion of $u$-representable graphs 
as a generalization of word-representable graphs. 
In this context, word-representable graphs are
called $11$-representable graphs. 
Kitaev~\cite{Kitaev17-JGT} showed that 
only two graph classes are nontrivial
in the theory of $u$-representable graphs: 
$11$-representable graphs and $12$-representable graphs. 
This paper focuses on $12$-representable graphs. 

Let $[n] = \{1, 2, \ldots, n\}$ for a positive integer $n$. 
A labeled graph $G$ whose labels are drawn from $[n]$ 
is \emph{$12$-representable} if there is a word $w$ over $[n]$ 
such that each letter of $[n]$ appears at least once in $w$ and 
two vertices $i$ and $j$ with $i < j$ are adjacent in $G$ 
if and only if no $i$ occurs before $j$ in $w$. 
In this situation, $w$ is said to \emph{$12$-represent} the graph $G$ 
and $w$ is called a \emph{$12$-representant} of $G$. 
For example, the graph $G_1$ in Figure~\ref{fig:G1}\subref{fig:G1-graph} 
is $12$-representable by a word $w = 8753532847616421$. 
An unlabeled graph $G$ is $12$-representable if 
there is a labeling of $G$ which generates a $12$-representable labeled graph. 

\begin{figure}[ht]
  \begin{minipage}{0.3\textwidth}
    \centering
    \subcaptionbox{\label{fig:G1-graph}}{\begin{tikzpicture}
\def\len{0.8}
\useasboundingbox (-2.3*\len, -1.5*\len) rectangle (2.3*\len, 2*\len);
%\draw             (-2.3*\len, -1.5*\len) rectangle (2.3*\len, 2*\len);
\tikzstyle{every node}=[draw,circle,fill=white,minimum size=5pt,inner sep=0pt]
\node [label=above:4]        (b) at ($(45:\len*1) + (135:\len*1)$) {};
\node [label=above right:8]  (c) at ($(45:\len*1) + (135:\len*0)$) {};
\node [label=right:6]        (d) at ($(45:\len*1) + (135:\len*-1)$) {};
\node [label=below:7]        (e) at ($(45:\len*0) + (135:\len*-1)$) {};
\node [label=above:1]        (f) at (0, 0) {};
\node [label=above left:5]   (g) at ($(45:\len*0) + (135:\len*1)$) {};
\node [label=left:2]         (h) at ($(45:\len*-1) + (135:\len*1)$) {};
\node [label=below:3]        (i) at ($(45:\len*-1) + (135:\len*0)$) {};
\draw [] 
	(b) -- (c) -- (d)
	(e) -- (f) -- (g)
	(h) -- (i)
	(b) -- (g) -- (h)
	(c) -- (f) -- (i)
	(d) -- (e)
;
\end{tikzpicture}}
  \end{minipage}\hfill
  \begin{minipage}{0.7\textwidth}
    \centering
    \subcaptionbox{\label{fig:G1-model}}{\begin{tikzpicture}
\def\len{0.45}
\useasboundingbox (-1.5*\len, -0.5) rectangle (17.5*\len, 4*\len+0.5);
%\draw             (-1.5*\len, -0.5) rectangle (17.5*\len, 4*\len+0.5);
\tikzstyle{every node}=[minimum size=5pt, inner sep=0pt]
\def\La{4*\len}
\def\Lb{0*\len}
\node [label=left:$L_1$] (L1) at (0, \La) {};
\node [label=left:$L_2$] (L2) at (0, \Lb) {};
\draw [thick] (L1) -- (17*\len, \La);
\draw [thick] (L2) -- (17*\len, \Lb);
\def\pa{(1.5*\len, \La)}
\def\pb{(3.5*\len, \La)}
\def\pc{(5.5*\len, \La)}
\def\pd{(7.5*\len, \La)}
\def\pe{(9.5*\len, \La)}
\def\pf{(11.5*\len, \La)}
\def\pg{(13.5*\len, \La)}
\def\ph{(15.5*\len, \La)}
\node [label=above:1] at \pa {};
\node [label=above:2] at \pb {};
\node [label=above:3] at \pc {};
\node [label=above:4] at \pd {};
\node [label=above:5] at \pe {};
\node [label=above:6] at \pf {};
\node [label=above:7] at \pg {};
\node [label=above:8] at \ph {};
\def\la{(1*\len, \Lb)}
\def\ra{(5*\len, \Lb)}
\def\lb{(2*\len, \Lb)}
\def\rb{(10*\len, \Lb)}
\def\lc{(11*\len, \Lb)}
\def\rc{(13*\len, \Lb)}
\def\ld{(3*\len, \Lb)}
\def\rd{(8*\len, \Lb)}
\def\le{(12*\len, \Lb)}
\def\re{(14*\len, \Lb)}
\def\lf{(4*\len, \Lb)}
\def\rf{(6*\len, \Lb)}
\def\lg{(7*\len, \Lb)}
\def\rg{(15*\len, \Lb)}
\def\lh{(9*\len, \Lb)}
\def\rh{(16*\len, \Lb)}
\node [label=below:1] at \la {};
\node [label=below:1] at \ra {};
\node [label=below:2] at \lb {};
\node [label=below:2] at \rb {};
\node [label=below:3] at \lc {};
\node [label=below:3] at \rc {};
\node [label=below:4] at \ld {};
\node [label=below:4] at \rd {};
\node [label=below:5] at \le {};
\node [label=below:5] at \re {};
\node [label=below:6] at \lf {};
\node [label=below:6] at \rf {};
\node [label=below:7] at \lg {};
\node [label=below:7] at \rg {};
\node [label=below:8] at \lh {};
\node [label=below:8] at \rh {};
\def\op{0.6}
\def\per{50}
\draw [fill=gray!\per, opacity=\op] \pa -- \la -- \ra --  cycle;
\draw [fill=gray!\per, opacity=\op] \pb -- \lb -- \rb --  cycle;
\draw [fill=gray!\per, opacity=\op] \pd -- \ld -- \rd --  cycle;
\draw [fill=gray!\per, opacity=\op] \pf -- \lf -- \rf --  cycle;
\draw [fill=gray!\per, opacity=\op] \ph -- \lh -- \rh --  cycle;
\draw [fill=gray!\per, opacity=\op] \pg -- \lg -- \rg --  cycle;
\draw [fill=gray!\per, opacity=\op] \pe -- \le -- \re --  cycle;
\draw [fill=gray!\per, opacity=\op] \pc -- \lc -- \rc --  cycle;
\end{tikzpicture}}
    \subcaptionbox{\label{fig:G1-model-2}}{\begin{tikzpicture}
\def\len{0.45}
\useasboundingbox (-1.5*\len, -0.5) rectangle (17.5*\len, 4*\len+0.5);
%\draw             (-1.5*\len, -0.5) rectangle (17.5*\len, 4*\len+0.5);
\tikzstyle{every node}=[minimum size=5pt, inner sep=0pt]
\def\La{4*\len}
\def\Lb{0*\len}
\node [label=left:$L_1$] (L1) at (0, \La) {};
\node [label=left:$L_2$] (L2) at (0, \Lb) {};
\draw [thick] (L1) -- (17*\len, \La);
\draw [thick] (L2) -- (17*\len, \Lb);
\def\pa{(1.5*\len, \La)}
\def\pb{(3.5*\len, \La)}
\def\pc{(5.5*\len, \La)}
\def\pd{(7.5*\len, \La)}
\def\pe{(9.5*\len, \La)}
\def\pf{(11.5*\len, \La)}
\def\pg{(13.5*\len, \La)}
\def\ph{(15.5*\len, \La)}
\node [label=above:1] at \pa {};
\node [label=above:2] at \pb {};
\node [label=above:3] at \pc {};
\node [label=above:4] at \pd {};
\node [label=above:5] at \pe {};
\node [label=above:6] at \pf {};
\node [label=above:7] at \pg {};
\node [label=above:8] at \ph {};
\def\la{(6*\len, \Lb)}%\def\la{(5*\len, \Lb)}
\def\ra{(6*\len, \Lb)}
\def\lb{(4*\len, \Lb)}
\def\rb{(12*\len, \Lb)}
\def\lc{(16*\len, \Lb)}%\def\lc{(15*\len, \Lb)}
\def\rc{(16*\len, \Lb)}
\def\ld{(3*\len, \Lb)}
\def\rd{(8*\len, \Lb)}
\def\le{(14*\len, \Lb)}%\def\le{(13*\len, \Lb)}
\def\re{(14*\len, \Lb)}
\def\lf{(1*\len, \Lb)}
\def\rf{(1*\len, \Lb)}%\def\rf{(2*\len, \Lb)}
\def\lg{(7*\len, \Lb)}
\def\rg{(11*\len, \Lb)}
\def\lh{(9*\len, \Lb)}
\def\rh{(9*\len, \Lb)}%\def\rh{(10*\len, \Lb)}
\node [label=below:1] at \la {};
\node [label=below:] at \ra {};
\node [label=below:2] at \lb {};
\node [label=below:2] at \rb {};
\node [label=below:3] at \lc {};
\node [label=below:] at \rc {};
\node [label=below:4] at \ld {};
\node [label=below:4] at \rd {};
\node [label=below:5] at \le {};
\node [label=below:] at \re {};
\node [label=below:6] at \lf {};
\node [label=below:] at \rf {};
\node [label=below:7] at \lg {};
\node [label=below:7] at \rg {};
\node [label=below:8] at \lh {};
\node [label=below:] at \rh {};
\def\op{0.6}
\def\per{50}
\draw [fill=gray!\per, opacity=\op] \pb -- \lb -- \rb --  cycle;
\draw [fill=gray!\per, opacity=\op] \pd -- \ld -- \rd --  cycle;
\draw [fill=gray!\per, opacity=\op] \pg -- \lg -- \rg --  cycle;
\draw [fill=gray!\per, opacity=\op] \pa -- \la -- \ra --  cycle;
\draw [fill=gray!\per, opacity=\op] \pc -- \lc -- \rc --  cycle;
\draw [fill=gray!\per, opacity=\op] \pe -- \le -- \re --  cycle;
\draw [fill=gray!\per, opacity=\op] \pf -- \lf -- \rf --  cycle;
\draw [fill=gray!\per, opacity=\op] \ph -- \lh -- \rh --  cycle;
\end{tikzpicture}}
  \end{minipage}
  \caption{
    \subref{fig:G1-graph} A $12$-representable graph $G_1$. 
    \subref{fig:G1-model} A model of the complement $\overline{G}_1$ of $G_1$. 
    \subref{fig:G1-model-2} Another model of $\overline{G}_1$. 
    The vertices are labeled based on the points on $L_1$. 
    The word $w = 8753532847616421$ obtained from the model in 
    Figure~\ref{fig:G1}\subref{fig:G1-model} is a $12$-representant of $G_1$. 
    We can obtain another $12$-representant $w' = 35278471246$ 
    from the model in Figure~\ref{fig:G1}\subref{fig:G1-model-2}. 
    As mentioned in Example~\ref{ex:G1}, 
    three vertices $2$, $4$, and $7$ are bad in $G_1$; 
    hence, $w'$ is a shortest $12$-representant of $G_1$. 
    }
  \label{fig:G1}
\end{figure}
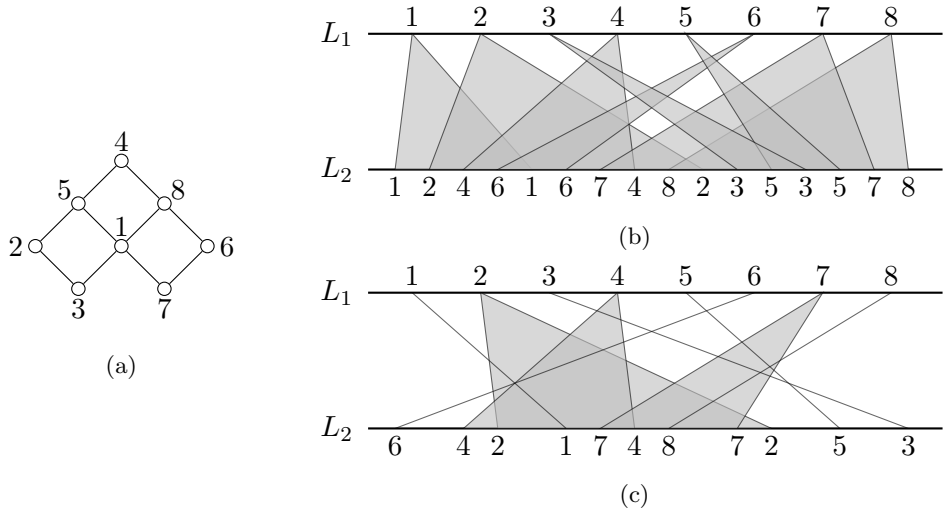

Jones et al.~\cite{JKPR15-EJC} showed that 
the class of $12$-representable graphs is 
a proper subclass of comparability graphs and 
a proper superclass of co-interval graphs and permutation graphs. 
They also provided a characterization of $12$-representable trees 
and a necessary condition for $12$-representability, 
which turned out to be sufficient (Theorem~\ref{thm:good labeling}). 
Chen and Kitaev~\cite{CK22-DMGT} investigated 
the $12$-representability of a subclass of grid graphs 
and presented its characterization. 

The class of $12$-representable graphs is equivalent to 
the complements of simple-triangle graphs~\cite{Takaoka23-DMGT-inpress}. 
This equivalence can be depicted as follows. 
Let $L_1$ and $L_2$ be two horizontal lines in the plane with $L_1$ above $L_2$. 
A point on $L_1$ and an interval on $L_2$ define a triangle between $L_1$ and $L_2$. 
A graph is a \emph{simple-triangle graph}~\cite{CK87-CN} 
if there is a triangle $T_v$ for each vertex $v$ of $G$ 
such that two vertices $u$ and $v$ are adjacent 
if and only if $T_u$ intersects $T_v$. 
The set $\{T_v \colon\ v \in V(G)\}$ of triangles is called 
a \emph{model} or \emph{representation} of $G$; 
we use the term model in this paper to avoid confusion. 
For example, Figure~\ref{fig:G1}\subref{fig:G1-model} is a model of 
the complement $\overline{G}_1$ of the graph $G_1$ in Figure~\ref{fig:G1}\subref{fig:G1-graph}. 
Indeed, two vertices of $G_1$ are adjacent if and only if the corresponding triangles do not intersect. 
Given a model of a simple-triangle graph, we can obtain a $12$-representant of its complement 
by labeling each triangle based on the point on $L_1$ from left to right 
and reading the labels of endpoints on $L_2$ \emph{from right to left}. 
For example, a $12$-representant $w = 8753532847616421$ of $G_1$ can be obtained 
from the model in Figure~\ref{fig:G1}\subref{fig:G1-model}. 
On the other hand, we can construct a model of a simple-triangle graph from a $12$-representant of its complement 
since the complement admits a $12$-representant 
in which each letter appears at most twice (Theorem~\ref{thm:at most twice}). 

It is worth mentioning that a simple-triangle graph admits several models, 
and different models can yield different $12$-representants. 
For instance, Figure~\ref{fig:G1}\subref{fig:G1-model-2} 
illustrates another model of $\overline{G}_1$, 
which provides another $12$-representant $w' = 35278471246$ of $G_1$. 
In addition, triangles can be degenerated to lines 
as in the model of Figure~\ref{fig:G1}\subref{fig:G1-model-2}, 
which correspond to letters appearing twice in the $12$-representant. 

Since simple-triangle graphs can be recognized in $O(nm)$ time~\cite{Takaoka20a-DAM}, 
the equivalence indicates that 
$12$-representable graphs can be recognized in $O(n(\bar{m}+n))$ time~\cite{Takaoka23-DMGT-inpress}, 
where $n$, $m$ and $\bar{m}$ are the number of vertices, edges and non-edges of the given graph, respectively. 
Moreover, a $12$-representant of a graph can be obtained in the same time bound if it exists. 

It should be noted that 
$12$-representable graphs can be recognized in $O(n^2)$ time for labeled graphs, 
i.e., when the labeling is given~\cite{Takaoka23-DMGT-inpress}. 
It is possible that some labeling of a graph admits a $12$-representant 
whereas the other does not (Theorem~\ref{thm:good labeling}). 
Finding valid labeling takes $O(n(\bar{m}+n))$ time, 
but when a valid labeling is given, 
we can obtain the $12$-representant in $O(n^2)$ time (Theorem~\ref{cor:representant}). 

The $12$-representants obtained by the method of~\cite{Takaoka23-DMGT-inpress} 
are of length $2n$, and 
improving the upper bound of the length remains open~\cite{Takaoka23-DMGT-inpress}. 
This is the subject the paper deals with. 
The problem can also be viewed as 
how many triangles in the model could be degenerated to lines. 
The paper proposes an $O(n^2)$-time algorithm 
to compute a shortest $12$-representant of the given \emph{labeled} graph. 
In particular, we show an algorithm to 
transform a $12$-representant $w$ of a labeled graph $G$ 
to a shortest $12$-representant $w'$ of $G$. 
The algorithm is presented in Section~\ref{sec:algo}. 
Section~\ref{sec:preliminaries} introduces some definitions, notations and results used in this paper. 
Section~\ref{sec:conclusion} discusses the unlabeled case and poses an open question. 
Notably, computing a word-representant is NP-hard 
regardless of the labeling since 
the recognition is NP-hard~\cite{Kitaev17-LNCS,KL15-book} and, 
unlike $12$-representable graphs, 
the labeling is not important for word-representable graphs.

\section{Preliminaries}\label{sec:preliminaries}
\paragraph{Graphs}
All graphs in this paper are finite, simple, and undirected. 
We use $uv$ to denote the edge joining two vertices $u$ and $v$. 
For a graph $G$, we use $V(G)$ and $E(G)$ 
to denote the vertex set and the edge set of $G$, respectively. 
We usually denote the number of vertices by $n$. 
The \emph{complement} of a graph $G$ is the graph $\overline{G}$ 
such that $V(\overline{G}) = V(G)$ and 
$uv \in E(\overline{G})$ if and only if $uv \notin E(G)$ 
for any two vertices $u, v$ of $\overline{G}$. 
For a graph $G$, a graph $H$ is an \emph{induced subgraph} of $G$ 
if $V(H) \subseteq V(G)$ and 
$uv \in V(H)$ if and only if $uv \in V(G)$ for all $u, v \in V(H)$. 

A \emph{labeled graph} of a graph $G$ is obtained from $G$ 
by assigning an integer (label) to each vertex. 
A \emph{labeling} of $G$ is an assignment of labels to the vertices of $G$. 
All labels are assumed to be distinct and 
drawn from $[n] = \{1, 2, \ldots, n\}$. 
For a labeled graph, we usually denote its vertices by their labels. 
Unless stated otherwise, graphs are assumed to be unlabeled. 

\paragraph{Trivial upper and lower bounds}
By definition, every $12$-representant contains 
at least one copy of each letter. 
Hence, $n$ is a lower bound of the length of $12$-representants. 
The following theorem yields the upper bound. 
\begin{theorem}[\cite{JKPR15-EJC}]\label{thm:at most twice}
For a $12$-representable graph, 
there is a $12$-representant in which each letter occurs at most twice. 
\end{theorem}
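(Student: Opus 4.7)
The plan is to prove the stronger structural fact that, for any 12-representant $w$ of $G$, the subword $w'$ obtained by keeping \emph{only} the leftmost and rightmost occurrences of each letter (a single occurrence if the letter appears only once) is itself a 12-representant of $G$. Since each letter then appears at most twice in $w'$, the theorem follows immediately.

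The key observation driving the argument is that for a pair $i<j$, the adjacency condition ``no $i$ occurs before any $j$'' depends \emph{only on extremal occurrences}: it is equivalent to ``the first occurrence of $i$ lies to the right of the last occurrence of $j$'', or, looking at the subword on $\{i,j\}$, that this subword has the form $j\cdots j\, i\cdots i$. So the positions of any middle occurrences of $i$ or $j$ are irrelevant to whether $ij$ is an edge.

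With this observation, the verification that $w'$ still $12$-represents $G$ is essentially a bookkeeping step. Each letter still appears in $w'$, so the alphabet condition holds. For any pair $i<j$, the first occurrence of $i$ and the last occurrence of $j$ in $w'$ are exactly the same physical letters that were the first $i$ and the last $j$ in $w$, because these are the occurrences we explicitly retained. Since taking a subword preserves relative order, the first $i$ lies to the right of the last $j$ in $w'$ if and only if it did in $w$, so the adjacency condition for the pair $ij$ is decided identically in $w$ and $w'$.

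There is no real obstacle to this plan; the only subtle point is to notice that the 12-condition is controlled by extremal occurrences, after which everything reduces to the trivial fact that subwords preserve order. One could also phrase the proof inductively, removing one ``middle'' occurrence at a time and checking invariance at each step, but the one-shot reduction to the first-and-last subword is cleaner and keeps the proof to a few lines.
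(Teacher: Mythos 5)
The paper does not prove this statement; it is imported verbatim from Jones et al.\ \cite{JKPR15-EJC}, so there is no in-paper proof to compare against. Judged on its own, your argument is correct and complete. The adjacency condition for $i<j$ (``no $i$ occurs before $j$'') is indeed equivalent to ``the first occurrence of $i$ lies to the right of the last occurrence of $j$,'' so it is determined solely by the extremal occurrences; deleting all middle occurrences keeps exactly those extremal occurrences as the new first/last occurrences, and taking a subword preserves their relative order, so every pair is decided identically in $w$ and $w'$. The one point worth stating explicitly (you do) is that the retained leftmost copy of $i$ in $w'$ really is the original first occurrence of $i$ in $w$, and likewise for the last occurrence of $j$. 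Your proof in fact yields a slightly stronger, constructive statement that fits the spirit of the paper: \emph{any} $12$-representant can be trimmed in linear time to one in which each letter occurs at most twice, which is exactly the normal form the paper's Algorithm~1 assumes of its input.
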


\begin{proposition}
The length of a shortest $12$-representant of a graph is 
at least $n$ and at most $2n$, 
where $n$ is the number of vertices of the graph. 
\end{proposition}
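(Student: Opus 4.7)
The proof is essentially a one-line consequence of the definition together with Theorem~\ref{thm:at most twice}, so the plan is just to record the two bounds cleanly.

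For the lower bound, I would appeal directly to the definition of a $12$-representant: every letter of $[n]$ must appear at least once in $w$. Since there are $n$ distinct letters, any $12$-representant has length at least $n$, and in particular the shortest one does.

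For the upper bound, I would invoke Theorem~\ref{thm:at most twice}, which guarantees the existence of some $12$-representant $w^\ast$ in which every letter occurs at most twice. The length of $w^\ast$ is then at most $2n$, and since a shortest $12$-representant can only be shorter than $w^\ast$, its length is at most $2n$ as well.

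There is no real obstacle here; the statement is just a packaging of the trivial lower bound together with the already-cited upper bound from Jones et al. The only care needed is to state that the lower bound uses the ``appears at least once'' clause in the definition, while the upper bound is not something we prove from scratch but simply read off from Theorem~\ref{thm:at most twice}.
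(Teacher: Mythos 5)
Your proposal is correct and matches the paper's reasoning exactly: the paper derives the lower bound $n$ from the definitional requirement that each letter appear at least once, and the upper bound $2n$ from Theorem~\ref{thm:at most twice}. Nothing further is needed.
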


The following theorem can also be used to obtain the lower bound. 
\begin{theorem}[\cite{JKPR15-EJC}]\label{thm:permutation}
A graph is $12$-representable by a permutation if and only if it is a permutation graph. 
\end{theorem}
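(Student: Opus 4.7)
The plan is to recognize that the $12$-representation relation, when specialized to words that are permutations, coincides precisely with the inversion relation of the permutation. Since a permutation graph is (up to isomorphism) a graph whose edges are the inversions of some permutation of $[n]$, the equivalence will follow from a direct comparison of the two defining conditions.

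First, I would fix a permutation $w = w_1 w_2 \cdots w_n$ of $[n]$ and analyze which labeled graph on vertex set $[n]$ it $12$-represents. Because each letter of $[n]$ occurs exactly once in $w$, the defining condition ``no $i$ occurs before $j$ in $w$'' for a pair $i < j$ is equivalent to ``the unique occurrence of $j$ precedes the unique occurrence of $i$ in $w$.'' Hence the edge set of the $12$-represented labeled graph is exactly
\[
\{\, \{i,j\} : i < j \text{ and } w^{-1}(j) < w^{-1}(i) \,\},
\]
that is, the inversion set of $w$.

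Next, I would invoke the standard definition of a permutation graph: an unlabeled graph $G$ is a permutation graph iff its vertices can be labeled by $[n]$ so that the resulting edge set equals the inversion set of some permutation of $[n]$. The observation above identifies the labeled graphs $12$-represented by permutations with precisely the inversion graphs of permutations. From this, both directions drop out. For the forward direction, starting from a permutation $12$-representant $w$ of some labeling of $G$, I would exhibit $w$ itself as the permutation realizing $G$ as an inversion graph, so $G$ is a permutation graph. For the reverse direction, starting from a permutation $\pi$ of $[n]$ whose inversion graph is isomorphic to $G$, I would transfer the induced labeling to $G$ and use $\pi$ as a permutation $12$-representant of this labeling.

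The main obstacle is almost entirely notational: one has to check that the asymmetric convention in the $12$-representation definition (``$i$ before $j$'' with $i < j$) lines up correctly with the usual inversion convention for permutation graphs, and one should be careful with the passage between labeled and unlabeled graphs. Once these bookkeeping conventions are reconciled, the theorem reduces to a direct identification of definitions and requires no further argument.
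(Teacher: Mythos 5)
Your proof is correct: specializing the $12$-representation condition to a permutation word identifies the represented labeled graph with the inversion graph of that permutation, which is exactly the standard characterization of permutation graphs, and the labeled/unlabeled bookkeeping you describe is all that remains. The paper itself imports this theorem from Jones et al.\ without reproving it, and your argument is the same direct definitional identification used in that source.
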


\begin{corollary}\label{cor:permutation}
If a graph is not a permutation graph, 
then the length of its $12$-representant is at least $n+1$, 
where $n$ is the number of vertices of the graph. 
\end{corollary}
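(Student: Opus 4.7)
The plan is to argue by contrapositive directly from Theorem~\ref{thm:permutation}. I would first observe that any $12$-representant $w$ has length at least $n$, since each of the $n$ letters must appear at least once. Then I would note that a $12$-representant of length exactly $n$ must use each letter exactly once (by a pigeonhole-style counting argument), which means $w$ is a permutation of $[n]$.

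From here the corollary falls out immediately: if the shortest $12$-representant of $G$ had length $n$, then by the previous observation $G$ would admit a $12$-representation by a permutation, and Theorem~\ref{thm:permutation} would force $G$ to be a permutation graph. Contrapositively, whenever $G$ is not a permutation graph, no $12$-representant can have length $n$, so its length must be at least $n+1$.

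There is no real obstacle here; the statement is essentially a rephrasing of Theorem~\ref{thm:permutation} combined with the trivial lower bound. The only mild care needed is the one-line verification that a word of length $n$ over $[n]$ in which every letter appears at least once is necessarily a permutation, which I would state explicitly rather than leave implicit.
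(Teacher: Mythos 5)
Your proposal is correct and is exactly the argument the paper intends: the corollary is stated without proof as an immediate consequence of Theorem~\ref{thm:permutation} together with the trivial observation that a $12$-representant of length $n$ must be a permutation of $[n]$. Your explicit one-line verification of that observation is a reasonable addition but does not change the route.
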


For example, applying Corollary~\ref{cor:permutation} 
to the graph $G_1$ in Figure~\ref{fig:G1}\subref{fig:G1-graph}, 
we obtain the lower bound. 
\begin{example}\label{ex:permutation}
We can see that the graph $G_1$ in Figure~\ref{fig:G1}\subref{fig:G1-graph} 
is not a permutation graph as follows. 
The graph obtained from $G_1$ by removing the vertex $6$ is 
isomorphic to the graph $\overline{\Gamma}_{12}[8]$ in~\cite{Gallai67,MP01-Gallai}. 
Thus, $G_1$ is not the complement of a comparability graph. 
Since any permutation graph is the complement of a comparability graph~\cite{PLE71-CJM}, 
the graph $G_1$ is not a permutation graph. 
Therefore, the length of every $12$-representant of $G_1$ is at least $9$. 
\end{example}

\paragraph{Labeling and recognition}
Labeling is important when dealing with $12$-representable graphs. 
The following theorem indicates that 
not all labeling of a $12$-representable graph 
is $12$-representable. 
\begin{theorem}[\cite{Takaoka23-DMGT-inpress}]\label{thm:good labeling}
A labeled graph $G$ is $12$-representable if and only if 
$G$ contains no induced subgraph $H$ such that 
$\red(H)$ is equal to one of $I_3$, $J_4$, or $Q_4$ in Figure~\ref{fig:I3 J4 Q4}, 
where $\red(H)$ denotes the \emph{reduced form} of $H$, 
i.e., the labeled graph obtained by relabeling $H$ 
so that the $i$-th smallest label is replaced by $i$. 
\end{theorem}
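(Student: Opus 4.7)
The plan is to prove both directions separately, using the equivalence between $12$-representable graphs and the complements of simple-triangle graphs recalled in the introduction.

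\textbf{Necessity.} I would first verify directly that none of the labeled graphs $I_3$, $J_4$, $Q_4$ admits a $12$-representant. For any labeled graph $H$ with labels drawn from some $[k]$, the definition forces, for each $i<j$, either \emph{every} occurrence of $j$ to precede \emph{every} occurrence of $i$ (when $ij$ is an edge) or \emph{some} occurrence of $i$ to precede \emph{some} occurrence of $j$ (when $ij$ is a non-edge). For each of the three patterns, writing down these precedence requirements for all pairs yields a cyclic chain of strict inequalities on positions in $w$, so no word can satisfy them. To conclude, I would note that $12$-representability is preserved under induced labeled subgraphs, since deleting all occurrences of the removed letters from a $12$-representant of $G$ produces a word that $12$-represents the induced subgraph, and replacing the surviving labels by their reduced form is a mere renaming that preserves the relative order.

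\textbf{Sufficiency.} Assume $G$ contains no induced $H$ with $\red(H)\in\{I_3,J_4,Q_4\}$. I would construct a $12$-representant by building a simple-triangle model of $\overline{G}$ and then reading off endpoints on $L_2$ from right to left, as described in Section~\ref{sec:introduction}. Place the apex of triangle $T_i$ at position $i$ on $L_1$. It then remains to assign a left endpoint $\ell_i$ and a right endpoint $r_i$ on $L_2$ to each vertex so that $T_i$ meets $T_j$ exactly when $ij \notin E(G)$. Expanding the intersection condition for two triangles with $i<j$ yields a list of linear-order constraints among the $\ell$'s, $r$'s, and the apex coordinates. I would then exhibit a consistent extension of these constraints to a total order, proceeding vertex by vertex and maintaining a suitable partial order on the endpoints; the key step is to show, at each stage, that the pattern avoidance hypothesis blocks precisely the cyclic conflicts that could prevent extension. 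The word obtained by reading the endpoints right-to-left will then $12$-represent $G$.

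\textbf{Main obstacle.} The hard part will be the sufficiency direction, namely verifying that the three patterns $I_3$, $J_4$, $Q_4$ really capture every local obstruction to extending the endpoint assignment. Concretely, I expect that any minimal cyclic conflict among the linear-order constraints on the endpoints involves at most four vertices, and a case analysis on how those vertices relate (edge/non-edge and label order) will match each irreducible conflict with exactly one of $I_3$, $J_4$, or $Q_4$ in reduced form. If this direct construction proves delicate, an alternative would be to translate the hypothesis into a known forbidden-order characterization of simple-triangle graphs for $\overline{G}$ with the vertex order induced by labels, and derive the existence of a model from there.
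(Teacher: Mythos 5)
This theorem is not proved in the paper at all: it is quoted from prior work (the reference attached to the theorem, with the necessity half originally due to Jones et al.), so there is no in-paper argument to compare yours against; your proposal has to stand on its own. The necessity direction of your proposal is sound and essentially routine. Each of $I_3$, $J_4$, $Q_4$ does yield a cyclic chain of precedence constraints (for $I_3$, the edges $12$ and $23$ force every $3$ before every $2$ and every $2$ before every $1$, contradicting the non-edge $13$, which demands some $1$ before some $3$; the checks for $J_4$ and $Q_4$ are similar), and $12$-representability is hereditary under taking induced labeled subgraphs followed by reduction, exactly as you argue.

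The sufficiency direction, however, contains a genuine gap. Your plan reduces the theorem to the claim that, in a vertex-by-vertex assignment of the endpoints $\ell_i, r_i$ on $L_2$, every minimal cyclic conflict among the induced linear-order constraints involves at most four vertices and matches one of the three patterns in reduced form. That claim \emph{is} the content of the theorem, and you offer nothing for it beyond ``I expect.'' It is not a priori true that obstructions to extending such a constraint system are local: implications among the positions of the $\ell_i$, $r_i$ and the apexes can propagate through arbitrarily long chains of vertices, and showing that every such long conflict already forces a short one is exactly the hard combinatorial work. (This difficulty is why simple-triangle graph recognition was open for a long time and why the known proofs route through the theory of linear-interval orders and a nontrivial $O(n^2)$-time model-construction algorithm.) Your fallback---translating the hypothesis into a known forbidden-pattern vertex-ordering characterization of simple-triangle graphs for $\overline{G}$---is in fact the route the cited work takes, but as written it merely defers the same difficulty to an unproved ``known'' result. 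To close the gap you would need either to carry out the locality-of-conflicts case analysis in full, or to state precisely which ordering characterization you are invoking and verify that the label order on $G$ translates into it under complementation.
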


\begin{figure}[ht]
  \centering
  \subcaptionbox{\label{fig:I3}}{\begin{tikzpicture}
\useasboundingbox (-1.5, -.7) rectangle (1.5, 0.7);
%\draw             (-1.5, -.7) rectangle (1.5, 0.7);
\tikzstyle{every node}=[draw,circle,fill=white,minimum size=5pt,inner sep=1pt]
\node [label=above:$1$] (x) at (-1, 0) {};
\node [label=above:$2$] (y) at ( 0, 0) {};
\node [label=above:$3$] (z) at ( 1, 0) {};
\draw [] (x) -- (y) -- (z);
\end{tikzpicture}}
  \subcaptionbox{\label{fig:J4}}{\begin{tikzpicture}
\useasboundingbox (-1.5, -.7) rectangle (1.5, 0.7);
%\draw             (-1.5, -.7) rectangle (1.5, 0.7);
\tikzstyle{every node}=[draw,circle,fill=white,minimum size=5pt,inner sep=1pt]
\node [label= left:$1$] (x) at (-.5,  .5) {};
\node [label=right:$3$] (y) at ( .5,  .5) {};
\node [label= left:$2$] (z) at (-.5, -.5) {};
\node [label=right:$4$] (w) at ( .5, -.5) {};
\draw [] (x) -- (y) (z) -- (w);
\end{tikzpicture}}
  \subcaptionbox{\label{fig:Q4}}{\begin{tikzpicture}
\useasboundingbox (-1.5, -.7) rectangle (1.5, 0.7);
%\draw             (-1.5, -.7) rectangle (1.5, 0.7);
\tikzstyle{every node}=[draw,circle,fill=white,minimum size=5pt,inner sep=1pt]
\node [label= left:$1$] (x) at (-.5,  .5) {};
\node [label=right:$4$] (y) at ( .5,  .5) {};
\node [label= left:$2$] (z) at (-.5, -.5) {};
\node [label=right:$3$] (w) at ( .5, -.5) {};
\draw [] (x) -- (y) (z) -- (w);
\end{tikzpicture}}
  \caption{The labeled graphs $I_3$~\subref{fig:I3}, $J_4$~\subref{fig:J4}, and $Q_4$~\subref{fig:Q4}. }
  \label{fig:I3 J4 Q4}
\end{figure}

It should be noted that the necessity shown in Theorem~\ref{thm:good labeling} 
was first presented by Jones et al.~\cite{JKPR15-EJC}. 
We call a labeling \emph{valid} 
if its resulting graph does not contain 
an induced subgraph isomorphic to $I_3$, $J_4$, or $Q_4$
in the reduced form. 

As mentioned in the introduction, 
$12$-representable graphs are exactly 
the complements of simple-triangle graphs~\cite{Takaoka23-DMGT-inpress}. 
It follows that $12$-representable graphs can be recognized in $O(n(\bar{m}+n))$ time, 
where $\bar{m}$ is the number of non-edges of the given graph. 
However, when a valid labeling is given, 
we can obtain its $12$-representant in $O(n^2)$ time. 
\begin{theorem}[{\cite{Takaoka23-DMGT-inpress}}]\label{cor:representant}
From a valid labeling of a $12$-representable graph $G$, 
a $12$-representant of $G$ can be obtained in $O(n^2)$ time 
without relabeling of $G$. 
\end{theorem}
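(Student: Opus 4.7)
The plan is to exploit the equivalence from~\cite{Takaoka23-DMGT-inpress} between $12$-representable labeled graphs and the complements of simple-triangle graphs: I will build an explicit model of $\overline{G}$ respecting the given labeling and then read off a $12$-representant from it. Because the labeling must be preserved, the apex of the triangle $T_i$ is locked at the $i$-th leftmost point $p_i$ on $L_1$; all the remaining freedom lies in choosing, for each $i$, a base interval $[l_i,r_i]$ on $L_2$.

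A short geometric analysis of triangles between two parallel lines shows that, for $i<j$ (hence $p_i<p_j$), the triangles $T_i$ and $T_j$ meet if and only if $r_i>l_j$. Since intersection in the model corresponds to adjacency in $\overline{G}$, the task reduces to assigning, for each $i$, values $l_i\le r_i$ so that
\[
\text{for all } i<j, \quad r_i>l_j \iff ij\notin E(G).
\]
I would compute these endpoints by sweeping $j=1,2,\dots,n$. At step $j$, the values $r_1,\dots,r_{j-1}$ have already been fixed; I would pick $l_j$ in the open interval
\[
\bigl(\max\{r_i : i<j,\ ij\in E(G)\},\ \min\{r_i : i<j,\ ij\notin E(G)\}\bigr)
\]
(with the obvious conventions of $-\infty$ and $+\infty$ for empty max/min) and then set $r_j\ge l_j$ large enough to leave room for later vertices. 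Each step inspects $O(n)$ previously computed values, for a total cost of $O(n^2)$; sorting the resulting $2n$ endpoints on $L_2$ and scanning them from right to left yields a $12$-representant of $G$ without any vertex having been relabeled.

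The main obstacle is showing that the open interval above is never empty, i.e., that $\max\{r_i : i<j,\ ij\in E(G)\}<\min\{r_i : i<j,\ ij\notin E(G)\}$ throughout the sweep. A violation would exhibit indices $i_1<i_2<j$ with $i_1j\in E(G)$, $i_2j\notin E(G)$ and $r_{i_1}\ge r_{i_2}$. Tracing inductively why $r_{i_1}$ had to be forced so large recovers earlier vertices responsible for the constraint; a short case analysis shows that any irreducible failure exposes an induced subgraph on at most four vertices whose reduced form is $I_3$, $J_4$, or $Q_4$ (one can check by hand that each of these three patterns directly produces a cyclic chain of inequalities $r_{a}\le l_{b}\le r_{b}\le\cdots<r_{a}$). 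Validity of the given labeling then rules this out by Theorem~\ref{thm:good labeling}, so the greedy sweep always succeeds, and correctness of the output is immediate from the defining equivalence $r_i>l_j\iff ij\notin E(G)$.
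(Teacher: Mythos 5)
Your high-level strategy --- fix the apexes on $L_1$ in label order, choose base intervals on $L_2$ so that $T_i\cap T_j\neq\emptyset$ exactly for non-edges, and read the word off the endpoints --- is the same route the paper takes; the paper, however, does not construct the model by hand but delegates it to the $O(n^2)$-time model-construction algorithm of~\cite{Takaoka18-DM}. The gap in your argument is precisely the step that algorithm exists to handle: your greedy sweep is not a well-defined procedure, because the instruction ``set $r_j\ge l_j$ large enough to leave room for later vertices'' cannot be carried out locally. The constraints on $r_j$ are of the form $r_j<l_{j'}$ for later neighbors $j'$ and $r_j\ge l_{j'}$ for later non-neighbors, and these force the \emph{relative order of $r_j$ among the already-placed right endpoints} in a way that depends on vertices not yet seen. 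Concretely, take $V=\{1,2,3\}$ with the single edge $23$ (a valid labeling: no $I_3$, $J_4$, or $Q_4$). At step $3$ you need $l_3\le r_1$ (since $13\notin E$) and $l_3>r_2$ (since $23\in E$), hence $r_2<r_1$; a sweep that at step $2$ picks $r_2$ ``large'' (in particular $r_2\ge r_1$) dead-ends, even though the labeled graph is $12$-representable (e.g., by $w=132$). Dually, always taking $r_j=l_j$ fails whenever $j$ is a bad vertex, since $a<b<c$ with $ab,bc\notin E$, $ac\in E$ forces $l_b\le r_a<l_c\le r_b$. So no local rule for $r_j$ works, and the algorithm as stated is incorrect rather than merely incomplete.

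The correctness argument has a matching gap. What you verify by hand is the easy \emph{necessity} direction (each of $I_3$, $J_4$, $Q_4$ yields an infeasible cycle of inequalities); what your proof needs is the converse --- that \emph{every} infeasibility of the endpoint system traces back to one of these three patterns. Infeasibility corresponds to a cycle in the constraint digraph on the $2n$ endpoint variables, and such cycles can be arbitrarily long; collapsing them to an induced subgraph on at most four vertices is essentially the sufficiency half of Theorem~\ref{thm:good labeling}, a substantial theorem, not ``a short case analysis.'' A clean repair that stays close to your setup: invoke Theorem~\ref{thm:good labeling} as a black box to conclude the labeled graph is $12$-representable, pass from a $12$-representant with each letter occurring at most twice (Theorem~\ref{thm:at most twice}) to a model with apexes in label order, and conclude that the full system $\{r_i<l_j:\ i<j,\ ij\in E\}\cup\{l_j\le r_i:\ i<j,\ ij\notin E\}\cup\{l_i\le r_i\}$ is satisfiable; a satisfying assignment can then be found globally (not by a one-pass greedy) by a topological-order computation on this digraph with $O(n)$ nodes and $O(n^2)$ arcs, which meets the $O(n^2)$ bound.
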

Therefore, the $12$-representability of \emph{labeled} graphs can be verified in $O(n^2)$ time. 
In more detail, given a valid labeling of a $12$-representable graph $G$, 
we can obtain a model of its complement $\overline{G}$ in $O(n^2)$ time 
by the algorithm in~\cite{Takaoka18-DM}. 
A $12$-representant of $G$ can be obtained from the model 
described in the introduction. 
Hence, the obtained $12$-representant contains 
at most two copies of each letter. 
\begin{theorem}\label{thm:representant}
Given a labeled graph, 
we can test in $O(n^2)$ time whether the graph is $12$-representable. 
If the graph is $12$-representable, then 
its $12$-representant in which each letter occurs at most twice 
can be obtained in $O(n^2)$ time. 
\end{theorem}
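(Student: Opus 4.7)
The plan is to leverage Theorem~\ref{cor:representant} together with an explicit verification step. First, I would run the $O(n^2)$-time model-construction algorithm of~\cite{Takaoka18-DM} on the given labeled graph $G$, attempting to build a model of $\overline{G}$. If the algorithm outputs a set of $n$ triangles, I read off a word $w$ by labeling each triangle by its point on $L_1$ (which is forced to equal the given vertex label) and concatenating the labels of the endpoints on $L_2$ from right to left, exactly as described in Section~\ref{sec:introduction}. Each triangle contributes at most two letters to $w$ (one if it is degenerate to a line), so $|w| \le 2n$, and in particular each letter appears at most twice.

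Next, I would verify that $w$ really $12$-represents $G$, and this is the crux of the $O(n^2)$ bound. For every letter $i$, precompute in a single left-to-right sweep the first occurrence $f_i$ and the last occurrence $l_i$ of $i$ in $w$; this costs $O(|w|) = O(n)$ in total. By definition, two labels $i < j$ are $12$-adjacent in $w$ if and only if no $i$ precedes any $j$, which is equivalent to the single scalar comparison $f_i > l_j$. Hence each of the $\binom{n}{2}$ pairs can be tested in constant time against the adjacency in $G$, for a total verification cost of $O(n^2)$. If the test succeeds on every pair, output $w$; otherwise report that $G$ is not $12$-representable.

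For correctness, if $G$ is $12$-representable as a labeled graph then Theorem~\ref{cor:representant} guarantees that the algorithm of~\cite{Takaoka18-DM} produces a valid model, so $w$ is a $12$-representant of $G$ with the required at-most-twice property and verification passes. Conversely, if verification passes then $w$ satisfies the definition of a $12$-representant of $G$, so $G$ is $12$-representable. Thus the whole procedure both decides $12$-representability and, in the positive case, outputs a $12$-representant of the required form, all within $O(n^2)$ time.

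The main obstacle is carrying out the verification inside the $O(n^2)$ budget rather than in the naive $O(n^2 \cdot |w|) = O(n^3)$: the key observation is that the $12$-adjacency condition ``every occurrence of $i$ lies after every occurrence of $j$'' collapses to the single comparison $f_i > l_j$, turning each pairwise check into $O(1)$ work. A secondary concern is the behaviour of the algorithm of~\cite{Takaoka18-DM} on an invalid labeling; because verification is explicit, any spurious model produced by the algorithm is safely rejected without having to reason about its internal failure modes, so we need not separately screen the labeling against the forbidden patterns $I_3$, $J_4$, and $Q_4$ of Theorem~\ref{thm:good labeling}.
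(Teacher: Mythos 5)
Your proposal is correct and follows essentially the same route as the paper: both obtain the word by running the $O(n^2)$ model-construction algorithm of~\cite{Takaoka18-DM} on the complement and reading the $L_2$ endpoints from right to left, which automatically gives at most two copies of each letter. The only real difference is on the decision side: the paper simply cites the $O(n^2)$ recognition result for labeled graphs from~\cite{Takaoka23-DMGT-inpress}, whereas you replace that citation with an explicit generate-and-verify step, and your observation that the pairwise check collapses to the single comparison $f_i > l_j$ is a correct and tidy way to keep verification within $O(n^2)$; this makes the argument slightly more self-contained at no cost.
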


\section{Algorithm}\label{sec:algo}
We first improve the lower bound of the length of $12$-representants. 
\begin{definition}
Let $G$ be a labeled graph. 
We refer to a vertex $b$ of $G$ as a \emph{bad vertex} if 
there exist two vertices $a$ and $c$ with $a < b < c$ 
such that $ab, bc \notin E(G)$ and $ac \in E(G)$. 
We call a vertex \emph{good} if it is not a bad vertex.  
\end{definition}

\begin{proposition}\label{prop:bad vertex}
Let $G$ be a $12$-representable labeled graph. 
Each bad vertex must occur twice in every $12$-representant of $G$. 
\end{proposition}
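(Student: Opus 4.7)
The plan is to argue by contradiction, assuming that a bad vertex $b$ occurs only once in some $12$-representant $w$ of $G$, and then use the defining property of $12$-representation to derive an ordering inside $w$ that contradicts one of the three edge/non-edge conditions defining badness.

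First I would unpack each of the three conditions $a < b < c$, $ab \notin E(G)$, $bc \notin E(G)$, $ac \in E(G)$ in terms of $w$, paying attention to the ordering convention: for $i < j$, $ij \in E(G)$ means no occurrence of $i$ precedes any occurrence of $j$ in $w$, and $ij \notin E(G)$ means at least one occurrence of $i$ precedes some occurrence of $j$. Applying this to our triple, $ab \notin E(G)$ with $a<b$ yields some position where an $a$ occurs before some $b$; $bc \notin E(G)$ with $b<c$ yields some position where a $b$ occurs before some $c$; and $ac \in E(G)$ with $a<c$ means that \emph{every} $c$ precedes \emph{every} $a$ in $w$.

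Now I would invoke the uniqueness assumption: if $b$ appears only once, then both occurrences of $b$ referenced above are the same position. Consequently there is an $a$ to the left of that unique $b$, and a $c$ to the right of it, producing a pattern $a \cdots b \cdots c$ within $w$. This places an $a$ before a $c$, directly violating the condition $ac \in E(G)$, giving the required contradiction. Hence $b$ must appear at least twice, which is the intended meaning of the statement (and, combined with Theorem~\ref{thm:at most twice}, one can say exactly twice in a suitably chosen representant, though the proposition only needs the lower bound).

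There is no real obstacle here; the only thing to be careful about is to correctly handle the asymmetric definition of $12$-representation, which keys off the numerical ordering of the labels rather than their positions in $w$. A short, clean write-up consisting of the contrapositive setup and the single chain of inequalities above should suffice.
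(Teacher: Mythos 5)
Your proposal is correct and follows essentially the same argument as the paper's proof: both use that $ac \in E(G)$ forces every $c$ to precede every $a$, while $ab, bc \notin E(G)$ force an occurrence of $b$ after some $a$ and an occurrence of $b$ before some $c$, which cannot coincide. The paper phrases this directly (the two occurrences must be distinct) whereas you phrase it as a contradiction with a single occurrence, but the content is identical.
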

\begin{proof}
Let $w$ be a $12$-representant of $G$, and 
let $a$, $b$, and $c$ be three vertices with $a < b < c$ 
such that $ab, bc \notin E(G)$ and $ac \in E(G)$. 
Since $ac \in E(G)$, every copy of $c$ occurs before the first occurrence of $a$ in $w$. 
Then, $ab, bc \notin E(G)$ implies that 
$b$ occurs after some $a$ and before some $c$ in $w$, 
respectively. 
\end{proof}

Proposition~\ref{prop:bad vertex} leads to the following lower bound. 
\begin{lemma}\label{lemma:lower bound}
Let $G$ be a labeled graph. 
The length of every $12$-representant of $G$ is at least $n + b$, 
where $n$ and $b$ are the number of vertices and bad vertices of $G$, respectively. 
\end{lemma}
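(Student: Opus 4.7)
The plan is to derive the lower bound by a direct counting argument built on top of Proposition~\ref{prop:bad vertex}. Let $w$ be an arbitrary $12$-representant of $G$, and let $B \subseteq V(G)$ denote the set of bad vertices, so $|B| = b$ and $|V(G) \setminus B| = n - b$.

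First I would recall that, by the definition of a $12$-representant, every letter of $[n]$ must appear at least once in $w$. This contributes at least $n - b$ occurrences coming from the good vertices. Next, I would invoke Proposition~\ref{prop:bad vertex} to conclude that each of the $b$ bad vertices must occur at least twice in $w$, contributing at least $2b$ further occurrences. Summing the two disjoint contributions gives a total length of at least $(n - b) + 2b = n + b$, which is the desired bound.

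The only minor subtlety to check is that the two counts are indeed over disjoint sets of vertices, so that adding them does not double-count; this is immediate because $B$ and $V(G) \setminus B$ partition $V(G)$, and the counts refer to occurrences of distinct letters. Since $w$ was chosen arbitrarily among all $12$-representants of $G$, the inequality $|w| \ge n + b$ holds for every $12$-representant, completing the proof. There is no real obstacle here; the content of the lemma lies in Proposition~\ref{prop:bad vertex}, and this lemma is essentially a repackaging of that statement as a global length bound.
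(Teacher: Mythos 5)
Your proof is correct and matches the paper's intent exactly: the paper states the lemma as an immediate consequence of Proposition~\ref{prop:bad vertex} with no written proof, and the counting argument $(n-b) + 2b = n+b$ you give is precisely that implicit argument.
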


In the rest of this section, 
we show that the length of shortest $12$-representants of 
a labeled graph is exactly $n + b$. 
Suppose that there is a $12$-representable graph $G$ and 
its $12$-representant $w$. 
By Theorem~\ref{thm:at most twice}, we can assume 
that each letter occurs at most twice in $w$. 
Proposition~\ref{prop:bad vertex} states that 
all bad vertices occur twice in $w$. 
If the length of $w$ is larger than $n + b$, 
then some good vertices occur twice in $w$. 
Therefore, we propose an algorithm to transform $w$ 
to another $12$-representant of $G$ in which no good vertices occur twice. 

The following is a key observation. 
\begin{proposition}\label{prop:reduce}
Let $G$ be a labeled graph with a $12$-representant of $w$. 
Suppose that a letter $i$ occurs twice in $w$. 
\begin{enumerate}[label=\emph{(\alph*)}]
\item 
Let $j$ be a letter just after the first occurrence of $i$, 
i.e., $w = W_1 i j W_2 i W_3$, 
where $W_1$, $W_2$, and $W_3$ are subwords of $w$. 
If $j < i$ then a word $w' = W_1 j i W_2 i W_3$ is a $12$-representant of $G$. 
\item 
Let $j$ be a letter just before the second occurrence of $i$, 
i.e., $w = W_1 i W_2 j i W_3$, 
where $W_1$, $W_2$, and $W_3$ are subwords of $w$. 
If $j > i$ then a word $w' = W_1 i W_2 i j W_3$ is a $12$-representant of $G$. 
\item 
If two occurrences of $i$ are consecutive in $w$, we can remove one occurrence. 
In other words, if $w = W_1 i i W_2$, 
where $W_1$ and $W_2$ are subwords of $w$, 
then a word $w' = W_1 i W_2$ is a $12$-representant of $G$. 
\end{enumerate}
\end{proposition}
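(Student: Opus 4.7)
The plan is to verify that each of the three local modifications preserves the adjacency relation induced by the word, namely that for vertices $p < q$ one has $pq \in E(G)$ iff no occurrence of $p$ precedes any occurrence of $q$, equivalently, iff the first occurrence of $p$ comes after the last occurrence of $q$. Since $w$ already $12$-represents $G$, it suffices to show that $w$ and $w'$ induce the same ``occurs before'' relation on every ordered pair of distinct letters.

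For part (a), I would begin by noting that the local swap replaces $ij$ by $ji$ while leaving every other letter in place. Hence for any pair $(p, q)$ with $\{p, q\} \cap \{i, j\} = \emptyset$ the adjacency in $G$ is preserved automatically. The pairs that need checking are $(j, i)$ and the pairs $(p, i)$, $(i, q)$, $(p, j)$, $(j, q)$ with the other letter distinct from $i$ and $j$. For $(j, i)$ the hypothesis $j < i$ makes this the relevant ordered pair: in both $w$ and $w'$ the swapped $j$ precedes the second copy of $i$ sitting in $W_3$, so some $j$ lies before some $i$ in either word and the pair is non-adjacent in both. For the pairs involving exactly one of $i$ or $j$, the modification shifts a single occurrence by one position. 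I would verify case by case that this shift cannot flip the comparison ``first $p$ > last $q$'': the second copy of $i$ pins the last occurrence of $i$, so shifting the first copy leaves last $i$ unchanged, and the positions $k$ and $k+1$ are themselves occupied by $i$ or $j$, so they cannot coincide with the first or last position of any other letter.

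Part (b) follows by a symmetric case analysis; the role played by the first copy of $i$ and the hypothesis $j < i$ in (a) is played by the second copy of $i$ and the hypothesis $j > i$ in (b), with the first copy of $i$ now acting as the pin. Part (c) is simpler still: deleting one of two consecutive identical letters $i$ leaves the relative order of every pair of occurrences of distinct letters unchanged, so the condition ``first $p$ > last $q$'' transfers verbatim between $w$ and $w'$.

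The only part requiring genuine care is the bookkeeping in (a) — specifically, ensuring that the swap does not flip a comparison for a pair such as $(j, q)$ where the swapped $j$ might be the unique $j$ in the word. The hypothesis $j < i$ is precisely what forces the critical $(j, i)$ comparison to be dictated by the second copy of $i$ rather than by the single letter being moved; a small check shows that without this hypothesis the swap can in fact create an adjacency, so the inequality between $i$ and $j$ is not cosmetic but load-bearing.
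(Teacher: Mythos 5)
Your proof is correct and takes essentially the same route as the paper's: the only pair whose induced adjacency could change is $\{i,j\}$, and the second occurrence of $i$ (resp.\ the first occurrence in part (b)) already witnesses the non-edge $ij$ in $w$, so the move preserves the represented graph; the paper simply states this in one line and leaves the bookkeeping for the unaffected pairs implicit. (Minor nit: the second copy of $i$ sits between $W_2$ and $W_3$, not inside $W_3$.)
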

\begin{proof}
(a) Since $j$ occurs before the second occurrence of $i$ in $w$, 
we have $ij \notin E(G)$. 
The letter $j$ still occurs before $i$ in $w'$, 
and hence, $w'$ is a $12$-representant of $G$. 
(b) This can be proved by a similar way. 
(c) Trivial. 
\end{proof}

Proposition~\ref{prop:reduce} leads to Algorithm~\ref{algo}, 
which generates a $12$-representant 
in which no good vertices occur twice. 
The algorithm works under the assumption 
that each letter occurs at most twice in the input. 

\begin{algorithm}[ht]
\caption{Computing a shortest $12$-representant of a labeled graph}\label{algo}
\LinesNumbered
\SetKw{KwDownTo}{downto}
\KwIn{A $12$-representant $w$ of a labeled graph $G$.}
\KwOut{A shortest $12$-representant of $G$.}
\BlankLine
\tcp{We assume that each letter occurs at most twice in $w$.}
\For{$i \gets n$ \KwDownTo $1$}{
  \If{$i$ occurs twice in $w$}{
    Set $p$ to the position of the first occurrence of $i$ in $w$\;
    \lWhile{$w_p > w_{p+1}$}{swap $w_p$ and $w_{p+1}$; $p = p + 1$}
    \lIf{$w_p = w_{p+1}$}{remove $w_p$ from $w$}
  }
}
\For{$j \gets 1$ \KwTo $n$}{
  \If{$j$ occurs twice in $w$}{
    Set $q$ to the position of the second occurrence of $j$ in $w$\;
    \lWhile{$w_q < w_{q-1}$}{swap $w_q$ and $w_{q-1}$; $q = q - 1$}
    \lIf{$w_q = w_{q-1}$}{remove $w_q$ from $w$}
  }
}
\KwRet{$w$}. 
\end{algorithm}

Before proving the correctness of Algorithm~\ref{algo}, 
we see how the algorithm works. 
\begin{example}\label{ex:G1}
Recall that the word $w = 8753532847616421$ is a $12$-representant 
of the graph $G_1$ in Figure~\ref{fig:G1}\subref{fig:G1-graph}. 
Applying Algorithm~\ref{algo} to $w$, 
we obtain the $12$-representant $w' = 35278471246$. 
The operation is illustrated in Figure~\ref{fig:operation}. 
Since three vertices $2$, $4$, and $7$ are bad in $G_1$, 
the word $w'$ is the shortest. 
\end{example}

\begin{figure}[tp]
  \centering\input{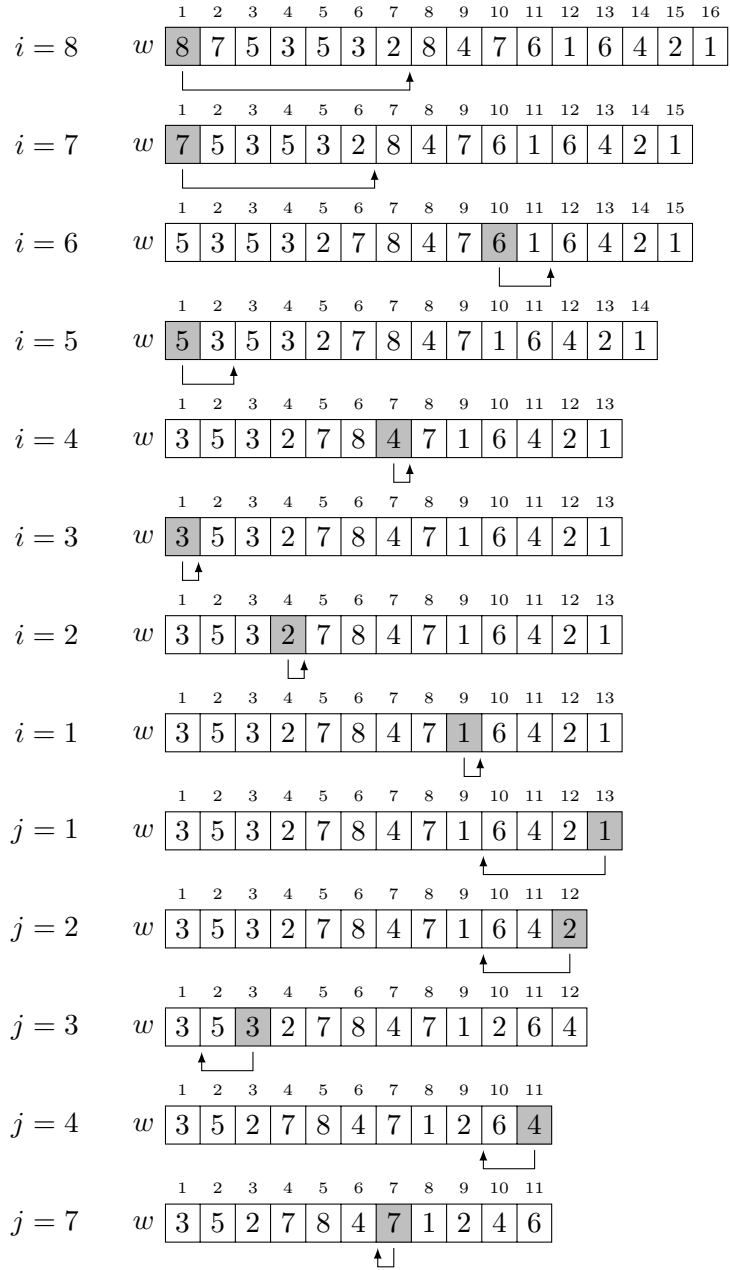}
  \caption{
    The operation of Algorithm~\ref{algo} on the word $w = 8753532847616421$. 
    Each letter is stored in the box whose position appears above. 
    In each iteration, the letter in the shaded box moves, and 
    the arrow denotes the move. 
    The first eight lines illustrate the loops in lines 1--7 
    while the remaining lines illustrate those in lines 8--14. 
    We omitted the cases $j = 5$, $6$, and $8$ 
    because these letters do not appear twice in $w$ at the time. 
  }
  \label{fig:operation}
\end{figure}

Now, we prove the correctness of Algorithm~\ref{algo}. 
\begin{theorem}\label{thm:correctness}
Algorithm~\ref{algo} computes a shortest $12$-representant of the labeled graph 
$12$-represented by the input. 
\end{theorem}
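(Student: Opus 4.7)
My approach splits into three pieces: showing the output is a 12-representant, matching the lower bound of Lemma~\ref{lemma:lower bound}, and a structural claim pinning down which letters may still appear twice.

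First, I would verify by induction on the iterations of the two loops that every swap and every merge performed by Algorithm~\ref{algo} is justified by one of the three parts of Proposition~\ref{prop:reduce}: parts (a) and (c) for phase~1, and parts (b) and (c) for phase~2. Since a merge only removes the second of two adjacent equal letters, no letter is ever eliminated completely, so the output is a 12-representant of $G$. By Lemma~\ref{lemma:lower bound} its length is at least $n+b$; everything then reduces to proving the matching upper bound, which is equivalent to the claim that every letter occurring twice in the output is bad.

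Second, to handle this claim I would maintain two invariants throughout the algorithm's execution. Writing $p_i$ and $q_i$ for the positions of the two occurrences of a letter $i$ in the current word, the invariants are (A)~$w_{p_i+1}>i$ and (B)~$w_{q_i-1}<i$ for every $i$ that currently occurs twice. Invariant (A) is created the moment phase~1 processes $i$ (the while-loop exits precisely at such a configuration or merges), and (B) is created symmetrically when phase~2 processes $i$. The subtle part is checking the invariants survive later iterations. For (A), phase~1 processing of a smaller $i'$ stalls its first-occurrence sweep on any letter $\ge i'$ and so cannot slide past first $i$; and a phase~2 sweep of any $j$ approaching first $i$ from the right either stops at $p_i+2$ (leaving $w_{p_i+1}$ unchanged), stops at $p_i+1$ (in which case the stopping condition forces $j>i$ and the new neighbour is fine), or passes through first $i$ (shifting first $i$ one step right, with the old neighbour $c>i$ now at position $p_i+2$). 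Invariant (B) is preserved by a symmetric argument.

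Third, assuming (A) and (B) hold in the output, I would exhibit witnesses to $i$'s badness. Let $k$ be the largest index in $[p_i+1,q_i-1]$ with $w_k>i$; this exists by (A) and forces $w_{k+1},\ldots,w_{q_i-1}$ to all be strictly less than $i$. Setting $c:=w_k$ and $a:=w_{k+1}$ gives $a<i<c$, and the positions of $i,a,c$ in $w$ immediately yield $ic,ai\notin E(G)$. The remaining step is $ac\in E(G)$, which I would obtain by showing that $k$ is in fact the rightmost position of $c$ in $w$ and $k+1$ is the leftmost position of $a$: any further occurrence of $c$ would have to lie beyond $q_i$, making $k$ the first occurrence of $c$ and contradicting invariant (A) applied to $c$ (since $w_{k+1}=a<c$); symmetrically, invariant (B) applied to $a$ rules out any occurrence of $a$ to the left of $k+1$. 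Hence every $a$ is to the right of every $c$ in $w$, so $ac\in E(G)$.

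The main obstacle I expect is the bookkeeping around invariants (A) and (B). Once a letter is processed it is tempting to treat its neighbourhood as frozen, but both phases continue to shuffle letters around it, and the case analysis for what a later sweep does when it encounters a previously processed letter is the only delicate piece of the argument. After that, the extraction of the witnesses $a$ and $c$ via the ``descent'' at position $k$ is a short combinatorial step.
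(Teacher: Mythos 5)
Your proposal is correct, and its overall skeleton coincides with the paper's: Proposition~\ref{prop:reduce} gives that the output still $12$-represents $G$, Lemma~\ref{lemma:lower bound} gives the lower bound $n+b$, and your invariants (A) and (B) are exactly the paper's Claims~\ref{claim:1} and~\ref{claim:2} (the paper states them at the end of the $j$th iteration of the second loop, restricted to letters $\geq j$ resp.\ $\leq j$; your version for the final word is a mild strengthening of Claim~\ref{claim:1}, and your ``passes through / stops at $p_i+1$'' case analysis is precisely what is needed to justify it, since a later phase-2 sweep of a letter larger than $i$ can stop at $p_i+1$). Where you genuinely diverge is the last step. The paper argues by contradiction: assuming a \emph{good} vertex $j$ survives twice, it uses goodness to force a new non-edge $k_1i_1$, hence a further copy of $i_1$ to the right of $p+1$ or of $k_1$ to the left of $q-1$, and iterates to build an impossible infinite monotone chain $w_p<w_{p+1}<\cdots$ or $w_q>w_{q-1}>\cdots$. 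You instead prove the contrapositive constructively: for any letter $i$ still doubled, the descent position $k$ (last index in $[p_i+1,q_i-1]$ with $w_k>i$) yields explicit witnesses $c=w_k>i>a=w_{k+1}$, and invariants (A) applied to $c$ and (B) applied to $a$ pin down $k$ as the last occurrence of $c$ and $k+1$ as the first occurrence of $a$, giving $ac\in E(G)$ and hence badness of $i$. Your route buys a direct certificate of badness for every doubled letter (and avoids the infinite-descent bookkeeping); the paper's route only needs the invariants in the weaker per-iteration form and never has to locate the extremal occurrences of the witnesses. Both are sound; yours is arguably the cleaner way to finish once the invariants are in place.
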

\begin{proof}
Let $G$ and $w$ denote the graph and the input, respectively. 
Proposition~\ref{prop:reduce} ensures that 
the output is still a $12$-representant of $G$. 
By Proposition~\ref{prop:bad vertex}, 
bad vertices occur twice in any $12$-representant. 
Hence, it suffices to prove that no good vertices occur twice in the output. 
We use $w_p$ to denote the letter of $w$ at position $p$, 
i.e., $w = w_1 w_2 \ldots w_{\ell}$. 
\par
Let $j$ be a good vertex of $G$. 
Suppose to the contrary that 
$j$ occurs twice in $w$ at the end of the $j$th loop in lines 8--14. 

\begin{claim}\label{claim:1}
Let $i$ be a letter with $i \geq j$ appearing twice in $w$ 
at the end of the $j$th loop in lines 8--14. 
If $p$ is the first position of $i$, then $w_p < w_{p+1}$. 
\end{claim}
\begin{proof}[Proof of Claim~\ref{claim:1}]
According to Algorithm~\ref{algo}, 
we can see $w_{p'} < w_{p'+1}$ at the end of the $(n-i+1)$th loop in lines 1--7, 
where $p'$ is the first position of $i$ at the time. 
After the $(n-i+1)$th loop, only letters smaller than $i$ move forward. 
Hence, $w_{p''} < w_{p''+1}$ at the beginning of the first loop in lines 8--13, 
where $p''$ is the first position of $i$ at the time. 
Before the end of the $j$th loop in lines 8--14, 
only letters smaller than or equal to $i$ move backward. 
Thus, the claim holds. 
\end{proof}

\begin{claim}\label{claim:2}
Let $k$ be a letter with $k \leq j$ appearing twice in $w$ 
at the end of the $j$th loop in lines 8--14. 
If $q$ is the second position of $k$, then $w_q > w_{q-1}$. 
\end{claim}
\begin{proof}[Proof of Claim~\ref{claim:2}]
According to Algorithm~\ref{algo}, 
we can see $w_{q'} > w_{q'-1}$ at the end of the $k$th loop in lines 8--14, 
where $q'$ is the second position of $k$ at the time. 
After the $k$th loop, only letters larger than $k$ move backward. 
Thus, the claim holds. 
\end{proof}

Let $p$ and $q$ be the first and second positions of $j$, 
respectively, at the end of the $j$th loop in lines 8--14. 
We have $w_{p+1} > w_p$ from Claim~\ref{claim:1} and 
$w_{q-1} < w_q$ from Claim~\ref{claim:2}. 
Let $i_1 = w_{p+1}$ and $k_1 = w_{q-1}$, 
i.e., $w = W_1 j i_1 W_2 k_1 j W_3$, 
where $W_1$, $W_2$, and $W_3$ are subwords of $w$. 
We have $k_1 < j < i_1$ and $k_1j, ji_1 \notin E(G)$. 
Since $j$ is a good vertex, $k_1i_1 \notin E(G)$. 
Hence, some $k_1$ occurs before some $i_1$ in $w$. 
It follows that 
another $i_1$ occurs after $w_{p+1}$ or 
another $k_1$ occurs before $w_{q-1}$. 
\par
Suppose that $i_1$ occurs after $w_{p+1}$. 
Claim~\ref{claim:1} indicates $w_{p+2} > w_{p+1}$. 
Let $i_2 = w_{p+2}$, 
i.e., $w = W_1 j i_1 i_2 W_4 k_1 j W_3$, 
where $W_4$ is a subword of $w$. 
We have $k_1 < j < i_2$ and $k_1j, ji_2 \notin E(G)$. 
Since $j$ is a good vertex, $k_1i_2 \notin E(G)$. 
Hence, some $k_1$ occurs before some $i_2$ in $w$. 
It follows that 
another $i_2$ occurs after $w_{p+2}$ or 
another $k_1$ occurs before $w_{q-1}$. 
\par
On the other hand, suppose that $k_1$ occurs before $w_{q-1}$. 
Claim~\ref{claim:2} indicates $w_{q-2} < w_{q-1}$. 
Let $k_2 = w_{q-2}$, 
i.e., $w = W_1 j i_1 W_5 k_2 k_1 j W_3$, 
where $W_5$ is a subword of $w$. 
We have $k_2 < j < i_1$ and $k_2j, ji_1 \notin E(G)$. 
Since $j$ is a good vertex, $k_2i_1 \notin E(G)$. 
Hence, some $k_2$ occurs before some $i_1$ in $w$. 
It follows that 
another $i_1$ occurs after $w_{p+1}$ or 
another $k_2$ occurs before $w_{q-2}$. 
\par
Continuing in this way, we obtain an infinite sequence 
$w_p < w_{p+1} < w_{p+2} < \cdots$ or 
$w_q > w_{q-1} > w_{q-2} > \cdots$, 
which is a contradiction. 
\end{proof}

From Theorem~\ref{thm:correctness}, we have the main theorem. 
\begin{theorem}
The length of a shortest $12$-representant of a labeled graph is $n + b$, 
where $n$ and $b$ are the number of vertices and bad vertices of the graph, respectively. 
A shortest $12$-representant of a labeled graph can be obtained in $O(n^2)$ time. 
\end{theorem}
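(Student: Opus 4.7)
The plan is to combine the three ingredients already in place: the lower bound from Lemma~\ref{lemma:lower bound}, the correctness of Algorithm~\ref{algo} from Theorem~\ref{thm:correctness}, and the initial $12$-representant supplied by Theorem~\ref{thm:representant}. For the length claim, I would first invoke Lemma~\ref{lemma:lower bound} to get that every $12$-representant has length at least $n+b$. For the matching upper bound, I would observe that Theorem~\ref{thm:correctness} produces a $12$-representant in which no good vertex occurs twice, and in which (by Proposition~\ref{prop:reduce}) no letter occurs more than it did in the input, so every letter still occurs at most twice. Combining these, the output has exactly one copy of each good vertex and at most two copies of each bad vertex, for a total length of at most $n+b$. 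Together with the lower bound this gives equality, and in particular bad vertices must occur exactly twice.

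For the running-time claim, I would chain two $O(n^2)$ routines. First, apply Theorem~\ref{thm:representant} to the input labeled graph: in $O(n^2)$ time this either reports non-$12$-representability or returns a $12$-representant $w$ of length at most $2n$ in which each letter occurs at most twice. Then feed $w$ into Algorithm~\ref{algo}. For the analysis of Algorithm~\ref{algo}, I would note that the word never grows in length (letters are only swapped or removed), so its length remains at most $2n$ throughout. Each outer iteration (for a fixed $i$ in lines~1--7, or a fixed $j$ in lines~8--14) performs one scan to locate the relevant occurrence plus a single monotone sweep whose total number of swaps is bounded by the current length of $w$; thus each iteration costs $O(n)$. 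With $n$ iterations of each outer loop the overall cost is $O(n^2)$, dominated by the preprocessing from Theorem~\ref{thm:representant}.

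I do not expect a real obstacle here: the content has been done in the preceding lemmas, and the theorem is essentially the assembly step. The only subtlety worth flagging explicitly is verifying that the assumption of Algorithm~\ref{algo}, namely that each letter appears at most twice, is met by the output of Theorem~\ref{thm:representant}, and that this ``at most twice'' property is preserved throughout the algorithm (it is, since lines~1--7 and 8--14 only swap or delete letters, never duplicate them). With that remark, the two conclusions of the theorem follow immediately from the earlier statements.
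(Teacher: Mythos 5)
Your proposal is correct and follows essentially the same route as the paper: the lower bound comes from Lemma~\ref{lemma:lower bound}, the matching upper bound from the output of Algorithm~\ref{algo} as certified by Theorem~\ref{thm:correctness}, and the running time from chaining Theorem~\ref{thm:representant} with the $O(n^2)$ algorithm. Your explicit remarks on preservation of the ``at most twice'' property and on the per-iteration cost are details the paper leaves implicit, but they do not change the argument.
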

\begin{proof}
The proof of Theorem~\ref{thm:correctness} indicates the first statement. 
Theorem~\ref{thm:representant} states that 
a $12$-representant of a labeled graph in which each letter occurs at most twice 
can be obtained in $O(n^2)$ time. 
It is obvious that Algorithm~\ref{algo} takes $O(n^2)$ time. 
Thus, the second statement holds. 
\end{proof}

\section{Concluding remarks}\label{sec:conclusion}
This paper proposes an $O(n^2)$-time algorithm to 
transform a $12$-representant $w$ of a labeled graph $G$ 
to a shortest $12$-representant $w'$ of $G$, 
where $n$ is the number of vertices of $G$. 
This indicates that shortest $12$-representants of labeled graphs can be obtained in $O(n^2)$ time. 
The natural next step is to study the unlabeled case, i.e., 
the problem of finding a shortest $12$-representant of the given unlabeled graph. 

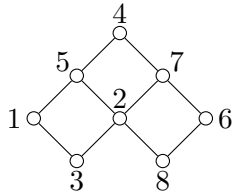
\begin{figure}[ht]
  \centering\begin{tikzpicture}
\def\len{0.8}
\useasboundingbox (-2.5*\len, -1.5*\len) rectangle (2.5*\len, 2*\len);
\tikzstyle{every node}=[draw,circle,fill=white,minimum size=5pt,inner sep=0pt]
\node [label=above:4]        (b) at ($(45:\len*1) + (135:\len*1)$) {};
\node [label=above right:7]  (c) at ($(45:\len*1) + (135:\len*0)$) {};
\node [label=right:6]        (d) at ($(45:\len*1) + (135:\len*-1)$) {};
\node [label=below:8]        (e) at ($(45:\len*0) + (135:\len*-1)$) {};
\node [label=above:2]        (f) at (0, 0) {};
\node [label=above left:5]   (g) at ($(45:\len*0) + (135:\len*1)$) {};
\node [label=left:1]         (h) at ($(45:\len*-1) + (135:\len*1)$) {};
\node [label=below:3]        (i) at ($(45:\len*-1) + (135:\len*0)$) {};
\draw [] 
	(b) -- (c) -- (d)
	(e) -- (f) -- (g)
	(h) -- (i)
	(b) -- (g) -- (h)
	(c) -- (f) -- (i)
	(d) -- (e)
;
\end{tikzpicture}
  \caption{
    Another labeling $G_2$ of the graph $G_1$ in Figure~\ref{fig:G1}\subref{fig:G1-graph}. 
    As shown in~\cite[Theorem 2.18]{CK22-DMGT}, 
    the word $w = 351748246$ is a $12$-representant of $G_2$. 
    }
  \label{fig:G2}
\end{figure}

For $12$-representability, 
labeling matters from the existential point of view (Theorem~\ref{thm:good labeling}). 
The labeling also matters 
to find a shortest $12$-representant of an unlabeled graph. 
In other words, the shortest $12$-representant of some labeling of a graph $G$ 
can be shorter than that of another labeling of $G$. 
For example, as shown in Example~\ref{ex:G1}, 
the shortest $12$-representant of the graph $G_1$ 
in Figure~\ref{fig:G1}\subref{fig:G1-graph} is of length $11 = n + 3$. 
However, if we relabel $G_1$ as $G_2$ in Figure~\ref{fig:G2}, then 
we obtain a $12$-representant $w = 351748246$ of length $9 = n + 1$, 
as shown in~\cite[Theorem 2.18]{CK22-DMGT}; 
only the vertex $4$ is bad in $G_2$. 
We can see from Example~\ref{ex:permutation} that 
$w$ is a shortest $12$-representant of the unlabeled graph. 
Therefore, we conclude this paper by posing the following open question. 
\begin{problem}
Given an unlabeled graph $G$, can we compute 
a valid labeling of $G$ minimizing the number of bad vertices 
in polynomial time? 
\end{problem}

\bibliographystyle{abbrvurl}
\bibliography{ref}
\end{document}